\newtheorem{lemma}{Lemma}
\newtheorem{theorem}{Theorem}
\newtheorem{corollary}[theorem]{Corollary}
\newtheorem{proposition}[theorem]{Proposition}
\theoremstyle{definition}
\newtheorem{definition}{Definition}
\newtheorem{openp}{Open Problem}
\newtheorem{claim}{Claim}
\def\blfootnote{\xdef\@thefnmark{}\@footnotetext}
\newcommand{\R}{\mathbb{R}}
\newcommand{\argmax}{\operatorname{arg\,max}}
\newcommand{\prob}[2][]{\text{\bf Pr}\ifthenelse{\not\equal{}{#1}}{_{#1}}{}\!\left[#2\right]}
\newcommand{\expect}[2][]{\text{\bf E}\ifthenelse{\not\equal{}{#1}}{_{#1}}{}\!\left[#2\right]}
\newcommand{\opt}{{\normalfont\textsc{Opt}}}
\newcommand{\alg}{{\normalfont\textsc{Alg}}}
\newcommand{\E}{\mathbb{E}}
\def\Pr{\ensuremath{\mathrm{Pr}}}
\def\argmax{\ensuremath{\mathrm{argmax}}}
\def\spann{\ensuremath{\mathrm{span}}}
\def\rank{\ensuremath{\mathrm{rank}}}
\newcommand{\ptope}{\mathcal{P}}
\newcommand{\optS}{S^*}
\newcommand{\algS}{\widehat{S}}
\DeclareMathOperator{\gout}{out}
\DeclareMathOperator{\gin}{in}
\DeclareMathOperator{\mspan}{span}
\newtheorem{fact}{Fact}
\def\I{\ensuremath{\mathcal{I}}}
\renewcommand{\S}{\mathcal{S}}
\newcommand{\notshow}[1]{{}}
\newcommand*{\pr}[2][]{\text{Pr}\ifx\\\left[#1\right]\\\else_{#1}\fi \left[#2\right]}
\newcommand{\kira}[1]{\noindent{\textbf{\color{violet}KG: #1}} \index{KIRA}}
\begin{document}
\title{Non-Adaptive Matroid Prophet Inequalities}
\author{
Shuchi Chawla\\ University of Wisconsin-Madison\\ shuchi@cs.wisc.edu \and 
Kira Goldner\\ Columbia University\\ kgoldner@cs.columbia.edu \and 
Anna R. Karlin\\ University of Washington\\ karlin@cs.washington.edu \and 
J. Benjamin Miller\\ Google\\ bmiller@cs.wisc.edu
}
\date{\today}

\maketitle

\begin{abstract}

We investigate non-adaptive algorithms for matroid prophet inequalities.  Matroid prophet inequalities have been considered resolved since 2012 when [KW12] introduced thresholds that guarantee a tight 2-approximation to the prophet; however, this algorithm is adaptive.  Other approaches of [CHMS10] and [FSZ16] have used non-adaptive thresholds with a feasibility restriction; however, this translates to adaptively changing an item's threshold to infinity when it cannot be taken with respect to the additional feasibility constraint, hence the algorithm is not truly non-adaptive.  A major application of prophet inequalities is in auction design, where non-adaptive prices possess a significant advantage: they convert to order-oblivious posted pricings, and are essential for translating a prophet inequality into a truthful mechanism for multi-dimensional buyers.  The existing matroid prophet inequalities do not suffice for this application.  We present the first non-adaptive constant-factor prophet inequality for graphic matroids. 
\end{abstract}

\addtocounter{page}{-1}
\newpage

\thispagestyle{plain}

\section{Introduction} \label{sec:intro}

We study the classic prophet inequality problem introduced by \citet{KrengelS77}: $n$ items arrive online in adversarial order.  A gambler observes the value of each item as it arrives, and in that moment, must decide irrevocably whether to take the item or pass on it forever.  He can accept at most one item. The gambler knows in advance the (independent) prior distribution of each item's value.  What rule should he use to maximize the value of the item he accepts?  In expectation, how does the maximum value that the gambler can guarantee compare to the \emph{prophet}, who knows all of the realized item values in advance and selects the highest valued one?

The prophet inequality is a standard model for online decision making in a stochastic/Bayesian setting and has many applications, particularly to mechanism design and pricing. Over the last few years many variants of the basic single-item setting have been studied. One natural generalization is to allow the gambler to accept more than one item, subject to a feasibility constraint.
Formally, we can represent a feasibility constraint as a collection $\S$ of feasible sets.  
Then both the gambler and prophet can each select any feasible set of items $S \in \S$; in the single-item setting, the feasible sets are just all singletons.  What is the gambler's best algorithm and guarantee?


A seminal result by \citet{Samuel-Cahn}  showed that for the basic single-item setting the online algorithm can obtain at least half of the prophet's value in expectation by determining a single threshold $T$ and accepting the first item with value exceeding $T$. Further, this approximation factor is tight: there exist instances where the gambler can do no better than $\frac{1}{2}$ as well as the prophet.  The threshold $T$ is selected such that the probability that the value of \emph{any} of the $n$ items exceeds the threshold is exactly $\frac{1}{2}$.

In 2012\footnote{Their original result appeared in STOC 2012 \citep{KleinbergW12}, but we will cite their journal version from 2019 for the remainder of the paper.}, \citeauthor{KleinbergW12} introduced an alternative approach for setting a single threshold: set $T = \frac{1}{2} \opt$.  Here, $\opt$ is what the prophet can achieve, and this approach guarantees the same $\frac{1}{2}$-approximation for a single item.  
\citeauthor{KleinbergW12} showed that this alternate approach generalizes also to \emph{matroid} prophet inequalities: where both the gambler and the prophet are restricted to accepting independent sets in a given matroid. In this setting, the approach of \citeauthor{KleinbergW12} still achieves a factor of $2$ approximation, matching the single item lower bound.

There is a significant qualitative difference between Samuel-Cahn's approach for the single item prophet inequality and \citeauthor{KleinbergW12}'s approach for the matroid setting. In particular, the former computes a single threshold that is then used for the entire duration of the algorithm. The latter on the other hand, recomputes thresholds after every decision. The threshold applied to the value of the second item, for example, depends on whether the first item was accepted by the algorithm or not, and in turn on the realized value of the first item. As a consequence, the KW algorithm is more complicated and involves more computation.

In this paper we address a natural problem exposed by this discussion: {\bf Can an online algorithm compete against the prophet using static thresholds under a matroid feasibility constraint?} 


There is an inherent connection between prophet inequalities and Bayesian mechanism design.  The original problem by \citeauthor{KrengelS77} was formulated as an optimal stopping problem; it was \citet{HKS07} that made the first connection to an economic welfare-maximization problem.  \citet{CHMS} studied this connection much more deeply, defining a truthful class of simple mechanisms called ``order-oblivious posted pricings.''  
  They show that one can translate a prophet inequality for $n$ items with feasibility constraint $\S$ into an order-oblivious posted pricing for an $n$-unit setting with unit-demand buyers and a service feasibility constraint\footnote{A service feasibility constraint $\S$ says that the set of buyers that are served simultaneously must belong to some set $S \in \S$.} corresponding to $\S$; this mechanism is truthful and it yields a revenue guarantee that matches its prophet inequality guarantee.  
  When $\S$ is a matroid, by \citet{KleinbergW19}, the resulting mechanism yields $\frac{1}{2}$-approximation to the optimal expected revenue.

This reduction from truthful mechanisms to prophet inequalities crucially
relies on the buyers being unit-demand.  If we wish instead wish to translate the prophet inequality into a mechanism for a single constrained-additive buyer subject to feasibility constraint $\S$ over $n$ heterogenous items---that is, the buyer is interested in buying
\emph{more} than one item---then adaptive thresholds will not translate to a truthful
mechanism.  Instead, they correspond to offering each item one-at-a-time to the
buyer in any order, but prices change as a function of previous purchases.
This update will not generally preserve truthfullness; that is, although the
buyer may wish to purchase the first item offered when considered myopically,
he may be better off declining, in order to avoid price increases on later
items.

In order to fix this reduction for multi-parameter buyers beyond unit-demand, we must use only prophet inequalities with non-adaptive thresholds.  
This is our primary motivation: constructing non-adaptive prophet inequalities in order to expand the realm of settings where prophet inequalities can be used for truthful mechanism design (see related work for an understanding of how integral they are as a tool in this field).  However, non-adaptive prophet inequalities possess numerous other attractive properties as well.  For welfare-maximizing mechanisms, non-adaptive prophet inequalities correspond to prices that are not only order-oblivious, but also \emph{anonymous}, using the same prices on each item regardless of the buyer.  Additionally, since the thresholds (prices) are all computed before the items arrive and are never updated, there is much less computation required than for adaptive thresholds.




\subsection{Our Contribution and Roadmap}

We present the first non-adaptive thresholds that give a constant-factor prophet inequality for graphic matroids.  We finish Section~\ref{sec:intro} with additional related work and in Section~\ref{sec:prelim}, we introduce mathematical preliminaries.  In Section~\ref{sec:motiv}, we discuss why extending non-adaptive algorithms to graphic matroids is such a challenging objective, and why prior methods fail.  Section~\ref{sec:tipi} presents the ex-ante relation to the matroid polytope: a reduction from a given prophet inequality instance to an alternative setting with convenient properties for designing algorithms.  Expert readers can safely skip this section.  Then, in this context, Section~\ref{sec:graphic} presents our construction for non-adaptive thresholds.



The ex-ante relaxation takes a given item $i$'s value distribution and converts it into a Bernoulli distribution: with probability $p_i$, item $i$ is ``active,'' that is, non-zero, and takes on value $t_i$.  Then, the threshold for item $i$ is implicit (just the non-zero value $t_i$), and the only remaining questions are (1) with what probability should our algorithm consider this item, and (2) with what probability will the item be ``unblocked,'' or feasible to accept, when the gambler reaches it?

In order to obtain a constant-factor approximation, then the probabilities of both selection and feasibility must be constant.  In a graphic matroid, the elements are the edges and the independent sets are the forests---that is, any set of edges that does not contain a cycle.  Depending on the given graph, an edge could be ``blocked'' by many different edges, and it is unclear how to group elements.  Our main idea is to orient the graph to have good properties and then exploit them.  For an edge $(u,v)$, suppose it is oriented into vertex $v$.  Notice that if no other edges incident to $v$ are selected by the algorithm, then $(u,v)$ will certainly be feasible to accept---it cannot possibly form a cycle by taking this edge.  We orient the graph such that all edges directed into $v$ have low enough probability mass such that with probability $\frac{1}{2}$, none are active.  Then, our algorithm decides which edges to consider such that, with constant probability for every $v$, it will consider the edges into $v$ and \emph{not} the edges out of $v$.  Hence, with no edges out of $v$ and a good chance that no edges into $v$ will be active, any edge into $v$ can be accepted with constant probability.  Our method for determining which edges to consider is simple: we take a random cut and consider only the edges in one direction across the cut.  Since every edge is oriented into some vertex $v$, it will be both considered and unblocked with constant probability, as desired.


Unfortunately, this approach is quite specific to a graphic matroid.  While some properties of the algorithm might extend to other matroids, we know that it cannot generalize to all matroids: \citet{FeldmanSZ19} prove a lower bound of $\Omega(\frac{\log n}{\log \log n})$ for prophet inequalities that use only non-adaptive thresholds for the class of general matroids.  Their lower bound example is a gammoid.

We pose the following two remaining open (but likely very difficult) open questions for understanding how far non-adaptive constant-factor approximations reach between graphic matroids and the lower bound of a gammoid.

\begin{openp}
What is the boundary within matroids for non-adaptive constant-factor approximations?
\end{openp}

\begin{openp}
How do approximations decay for non-adaptive thresholds as matroids become more complex?
\end{openp}




\subsection{Additional Related Work} \label{sec:related}

\paragraph{Non-Adaptive Thresholds.} As mentioned, the two predominant
approaches for achieving $\frac{1}{2}$-approximation in the single-item setting
are both non-adaptive~\citep{Samuel-Cahn,KleinbergW19}.  \citet{CHMS} provide
non-adaptive $\frac{1}{2}$-approximations to the prophet for both $k$-uniform
and partition matroids; they also give a non-adaptive $O(\log r)$-approximation
for general matroids, where $r$ is the rank of the matroid.  Recent work by
\citet{GravinW19} gives a non-adaptive algorithm that guarantees a
$3$-approximation to the prophet for online bipartite matching, which is the
intersection of two matroids.  \citet{ChawlaDL20} optimize non-adaptive
thresholds for the $k$-uniform settings depending on the range that $k$ is in,
improving existing guarantees in the $k < 20$ regime.  No non-adaptive
algorithms are known beyond uniform and partition matroids and the special case
of bipartite matching.

\paragraph{Constrained Non-Adaptive.} Another class of algorithms uses non-adaptive thresholds and a restricted feasibility constraint.  That is, given a feasibility constraint $\S$ and the prior distributions for $n$ items, the algorithms set, prior to the arrival of all items, thresholds $T_i$ for each item and a restricted feasibility constraint $\S'$ such that $\S' \subset \S$.  Then, an item is accepted if it exceeds its threshold \emph{and} is feasible with respect to the items already accepted and the subconstraint $\S'$.  Notice that an item could exceed its threshold, be feasible with respect to previously accepted items and $\S$, and yet not be accepted because it is not feasible with respect to previously accepted items and $\S'$.  In essence, imposing a subconstraint is equivalent to adaptively changing an item's threshold to $T_i = \infty$ if the item is not feasible with respect to the subconstraint.  


Why is this different than when the gambler rejects an item that exceeds its threshold but is not feasible with respect to $\S$?  We can interpret the gambler's value as constrained-additive with respect to $\S$, so the gambler does not have any marginal gain for items that are infeasible with respect to $\S$ and the items he has already accepted.  Hence, he has no reason to take items with no positive marginal value to him.  This is not a restriction on the algorithm, but rather a result of the gambler's valuation class.

\citet{CHMS} first produced a $\frac{1}{3}$-approximation to the prophet for graphic matroids using non-adaptive thresholds with a partition matroid subconstraint.  In a very elegant approach, \citet{FeldmanSZ16} produce an Online Contention Resolution Scheme (OCRS) that yields a $\frac{1}{4}$-approximation for all matroids using non-adaptive thresholds and a subconstraint built cleverly from the structure of the given matroid.


\paragraph{Prophet Inequalities Beyond Matroids.} Prophet inequalities are well-studied and the literature is far too broad to cover; see \citep{Lucier17} for an excellent survey.  Note, however, that dynamic algorithms yield good approximations to the prophet in settings reaching beyond matroids.  In addition to matroids, the approach of \citep{FeldmanSZ16} also applies to matchings, knapsack constraints, and the intersections of each.  Very recent work by \citet{DuttingKL20} gives an algorithm guaranteeing an $O(\log \log n)$-approximation for the very general setting of multiple buyers with subadditive valuations.

\paragraph{Direct Applications to Pricing.}  The \citet{CHMS} reduction from order-oblivious posted pricings to prophet inequalities was only the first of many pricing applications of prophet inequalities.  
\citet{FGL} considers the setting where buyers arrive online and face posted prices for items; non-adaptive anonymous prices are posted for each item equal to half its contribution to the optimal welfare.  These prices guarantee $\frac{1}{2}$-approximation to the optimal welfare for fractionally subadditive valuations.  Note that this is a prophet inequality when there is only one item.  
\citet{DuttingFKL17,DuttingFKL20} connect posted prices and prophet inequalities: they interpret the \citet{KleinbergW19} thresholds as ``balanced prices'' and derive an economic intuition for the proof.  They extended these balanced prices to more complex settings, including a variety of feasibility constraints and valuation classes.  The approach is to prove guarantees in the full information setting, where the realized values are known in advance.  Then, via an extension theorem, they prove that the results hold for Bayesian settings too, where distributions are known but values are unknown.  Note that their balanced prices result in non-adaptive anonymous prices for all settings they consider \emph{except} for matroids feasibility constraints, where they remain adaptive and buyer-specific.
The recent work of \citet{DuttingKL20} also implements posted prices for buyers with subadditive valuations, but rather than balanced prices, provides a weaker sufficient condition to get a tighter approximation, and shows the existence of such prices through a primal-dual approach.



\paragraph{More Subtle Applications in Mechanism Design and Analysis.} Beyond direct applications to pricing, prophet inequalities have also been used in to build more complex mechanisms and prove approximation guarantees.  \citet{CM} design a two-part tariff mechanism to approximate optimal revenue for matroid-constrained buyers.  Their benchmark is an ex-ante relaxation, and they use an OCRS \citep{FeldmanSZ16} to achieve a constant fraction of that revenue.  \citet{CZ17} prove that the better of a sequential posted price mechanism (where each buyer can only buy one item) and an anonymous sequential posted price mechanism with an entry fee yields a constant-approximation to the optimal revenue for multiple fractionally subadditive buyers (and $O(\log n)$-approximation for fully subadditive). In a specific case of their analysis that analyzes the core of the core (a double core-tail analysis follow the original of \citep{LY13}), they use \citep{FGL}.  
Work by \citet{CZ19} approximates the optimal profit---seller revenue minus cost---for constrained-additive buyers.  Like \citep{CM}, they also construct their benchmark using the ex-ante relaxation and use OCRS to bound a term here as well.  
Recent work by \citet{CaiGMZ20}  studies gains from trade approximation in a two-sided market with a constrained-additive buyer and single-dimensional sellers---both the single-item prophet inequality of \citep{KleinbergW19} and an OCRS are used to inspire prices for \emph{both} the buyer and the sellers simultaneously and then show that enough gains from trade will be received to approximate one specific part of their benchmark.

\section{Preliminaries} \label{sec:prelim}

\begin{definition}
A \emph{matroid} $M = (N, \I)$ is defined by a ground set of elements $N$ (with $|N| = n$) and a set of independent sets $\I \subseteq 2^N$.  It is a matroid if and only if it satisfies the following two properties:
\begin{enumerate}
\item Downward-closed: If $I \subset J$ and $J \in \I$ then $I \in \I$.
\item Matroid-exchange: For $I, J \in \I$, if $|J| > |I|$ then there exists some $i \in J \setminus I$ such that $I \cup \{i\} \in \I$.
\end{enumerate} \end{definition}

\noindent We review several standard notions for matroids:
\begin{itemize}
\item The \emph{rank} of a set $\rank(S)$ is the size of the largest independent set in $S$: $\max \{|I| \mid I \in \I, I \subseteq S\}$.
\item The \emph{span} of a set $\spann(S)$ is the largest set that contains $S$ and has the same rank as $S$: $\{i \in N \mid \rank(S \cup \{e\}) = \rank(S)\}$.
\item An element $i$ is \emph{spanned} by a set $S$ when $i \in \spann(S)$.
\end{itemize}

We will informally use the language ``blocked'' (by a set $S$) to mean that an element is spanned (by the set $S$), and similarly ``unblocked'' to mean that an element is \emph{not} spanned (by the set $S$).

For any matroid $M$, we have the \emph{matroid polytope} $\ptope_M = $ 
$\{ \vec{p} \in \R_{\geq 0} ^M \mid \forall S \in 2^N, \sum _{i \in S} p_i \leq \rank(S) \}$.  That is, $\ptope_M$ is the convex hull of the independent sets $\I$.

\begin{definition} A \emph{Matroid Prophet Inequality instance} $(\vec{X}, M)$ is given by a matroid $M =(N, \I)$ and distribution of values $\vec{X}$ for the $n$ items that are the ground set $N$.  $X_i$ denotes the random variable representing the value for item $i$.  \end{definition}

For any given matroid prophet inequality instance, we let $\opt(\vec{X}, M)$ denote the value of the prophet's set in expectation of the value of the items.  Formally, $\opt(\vec{X}, M) = \E\big[\max_{I\in \I} \sum_{i\in I}X_i\big]$.  We omit the distributions $\vec{X}$ or matroid $M$ when it is obvious from context.

\begin{definition}A \emph{non-adaptive} threshold algorithm is given an instance $(\vec{X}, M)$ and determines thresholds $\vec{T}$.  A threshold $T_i$ for each item $i$ is a function only of the random variables $\vec{X}$ (and, in particular, not as a function of any realizations of $\vec{X}$ or whether previous items have exceeded thresholds thus far). \end{definition}

For any non-adaptive thresholds $\vec{T}$, we let $\alg(\vec{X}, M, \vec{T})$ denote the expected value obtained by the algorithm.  Again, we omit the parameters when they are clear from context.
\section{Where Straightforward Extensions Fail} \label{sec:motiv}

Both of the non-adaptive single-item approaches---the probabilistic approach of
\citet{Samuel-Cahn} and the $\frac{1}{2} \opt$ approach of
\citet{KleinbergW19}---extend to the $k$-uniform matroid setting, in which any
set of size at most $k$ is feasible.  We first see why these approaches work for
$k$-uniform matroids yet break down for graphic matroids.  Then, we attempt to
use an idea for graphic matroids from \citet{CHMS} to develop a non-adaptive
algorithm, and again highlight where the approach breaks down.

We begin with the two generalizations to $k$-uniform methods.  Note that we do not claim either as part of our contribution, although to the best of our knowledge, neither approaches' generalized thresholds and proof is written anywhere.



Formally, a $k$-uniform matroid is the matroid where, for any given ground set $N$, $\I = \{I \subset N : |I| \leq k\}$.  Bear in mind that $k=1$ returns to the single-item case.

\paragraph{The Probabilistic Approach.} (Extension of \citet{Samuel-Cahn} single-item algorithm to non-adaptive thresholds for the $k$-uniform matroid.)  Determine the thresholds $T$ by setting $\Pr[\text{$< k$ item values exceed $T$}] = \Pr[\text{$\geq 1$ slot empty}] = p = \frac{1}{2}$.  

\begin{align*}
\alg(\vec{X}, T) &\geq \sum _i \Pr[i \text{ not blocked}] \E[(X_i - T)^+] + \Pr[\text{$\geq k$ above T}] \cdot k T \\
&\geq \Pr[\text{$< k$ above T}] \sum _i  \E[(X_i - T)^+] + \Pr[\text{$\geq k$ above T}] \cdot k T \\
&\geq p  \E \left[ \max_{S: |S| \leq k} \sum_{i \in S} (X_i - T)^+ \right] + (1-p) k T \\
&\geq p  \E\left[\max_{S: |S| \leq k} \sum_{i \in S} X_i - kT\right] + (1-p) kT \\
&= \frac{1}{2} ( \E\left[\max_{S: |S| \leq k} \sum_{i \in S} X_i \right] ) - \frac{1}{2} kT + \frac{1}{2} kT\\
&=  \frac{1}{2} \E\left[\max_{S: |S| \leq k} \sum_{i \in S} X_i \right] = \frac{1}{2} \opt(\vec{X}).
\end{align*}

For uniform matroids, a simple characterization based on size exists
for sets that do not span \emph{any} elements that have yet to arrive: they
need only be of size strictly less than $k$.  This property does not hold for
more complex matroids.

\paragraph{The ``Thresholds as Constant-Fraction of Prophet'' Approach.}  (Extension of \citet{KleinbergW19} single-item algorithm to non-adaptive thresholds for the $k$-uniform matroid; almost identical to those in \citet{CHMS}.)  Set $T = \frac{1}{2k} \E\left[\max_{S: |S| \leq k} \sum_{i \in S} X_i \right] = \frac{1}{2k} \opt(\vec{X})$.

\begin{align*}
\alg &\geq \sum _i \Pr[i \text{ not blocked}] \E[(X_i - T)^+] + \Pr[\text{$\geq k$ above T}] kT \\
&\geq \Pr[\text{$< k$ above T}]  \E[\sum _i (X_i - T)^+] + \Pr[\text{$\geq k$ above T}] kT \\
&\geq p  \E\left[\max_{S: |S| \leq k} \sum_{i \in S} (X_i - T)^+\right] + (1-p) kT \\
&= p \left( \E\left[\max_{S: |S| \leq k} \sum_{i \in S} X_i \right] - kT\right) + (1-p) \frac{1}{2} \E\left[\max_{S: |S| \leq k}\sum_{i \in S} X_i \right] \\
&=  \frac{1}{2} \E\left[\max_{S: |S| \leq k} \sum_{i \in S} X_i \right] = \frac{1}{2} \opt(\vec{X}).
\end{align*}

In uniform matroids, any element is exchangeable for any other element.  Then
so long as it contributes enough value, such as at least a constant fraction of
the average contribution to the optimal basis, there is no reason not to
accept an element.  However, this does not hold for more complex matroids.  A
particular element, even if extremely high value, may cause so many other
elements to be spanned that it is not worth taking.
  
  One can imagine more nuanced extensions of either such approach---probabilistic thresholds for $i$ according to how many elements it might block, or value-based thresholds for $i$ based on the value of the sets it might block.  However, any such extension would require a matroid-specific understanding of the relationship between elements, and element-specific thresholds.
  
  Note that in addition to uniform matroids, both approaches easily extend to partition matroids by applying the approach to thresholds specific to the uniform matroid in each partition.

\paragraph{The Constrained Non-Adaptive Approach.}  \citet{CHMS} construct non-adaptive thresholds for a graphic matroid that work \emph{so long as} the algorithm can enforce an additional subconstraint.  Specifically, they cleverly partition the graph such that, so long as at most one edge is accepted from each partition, then an independent set is guaranteed.  Then as items arrive, they are accepted if and only if they exceed their threshold \emph{and} are feasible with respect to the subconstraint---that is, no previous item from its partition has been accepted.  This approach guarantees a $\frac{1}{3}$-approximation.

As discussed in the introduction, enforcing a subconstraint \emph{is} in fact adaptive.  But, we \emph{could}, for example, randomly select one item from each partition in advance, defining our set for consideration $C$.  Then, as items arrive, in each partition, we consider only the item in $C$, ignoring all other items from each partition.  That is, we leave thresholds the same for all items in $C$ and \emph{a priori} set $T_i = \infty$ for all $i \not \in C$.  This ensures that we only consider a set that complies with our feasibility constraint \emph{without} making any modifications online.  Note that we can select items to be in the consideration set $C$ with whatever probabilities we choose, even in a correlated fashion---as long as we make them prior to items arriving---thus setting all thresholds to $T_i$ or $\infty$ in advance.  Is there some clever way that we can implement our feasibility constraint, or any feasibility constraint, yet maintain a constant-factor approximation?

For the approach of CHMS, we might observe that a convenient property that bounds the probability mass of each partition could allow us to form a probability distribution over elements in each partition (i.e. place item $i$ in $C$ with probability $p_i/2$).  However, this approach in fact reduces the probability too much, as it combines the probability that the element is active with the probability it is considered, and is no longer constant.  If we use a constant probability, it would instead sell to too low of a quantile.

If such an approach \emph{were} to work, we could convert \emph{any} non-adaptive matroid prophet inequality to a prophet inequality, as a greedy OCRS exists for all matroids and constructs constrained non-adaptive thresholds all matroids \citep{FeldmanSZ16}.  However, \citet{FeldmanSZ19} also prove a super-constant lower bound of $\Omega(\frac{\log n}{\log \log n} )$, so guarantees cannot possibly go through for every matroid.  Thus, an interesting direction for future work is to characterize \emph{when} an approach of converting constrained non-adaptive thresholds to a fully non-adaptive algorithm in this way would maintain good guarantees.

\section{The Ex-Ante Relaxation to the Matroid Polytope}
\label{sec:tipi}

Reducing a given matroid prophet inequality instance to one with Bernoulli distributions that sits within the matroid polytope is ``standard,'' and is used in \citep{FeldmanSZ16}.  It's ``just'' an ex-ante relaxation to the matroid polytope, and expert readers can safely skip this section.  However, we present the reduction in detail for comprehensiveness and ease of reading, as we did not find it elsewhere.   

First, given arbitrary independent random variables $X_i$, we reduce the problem to
designing an algorithm for independent Bernoulli random variables
$X'_i$:
\begin{align*}
    X_i' = \begin{cases}
        t_i & \text{w.p. } p_i \\
        0 & \text{w.p. } 1-p_i,
    \end{cases}
\end{align*}
where $\vec{p} \in \ptope_M$.

Reducing to Bernoulli random variables gives two properties which greatly
simplify the design of an algorithm:
\begin{enumerate}
    \item Each element of the ground set is either {\it active} or {\it
        inactive}; and
    \item There exists a worst-case total ordering of the elements.
\end{enumerate}
The worst-case ordering is the typical greedy ordering. Assume $t_i \le
t_{i+1}$; then greedily selecting elements in order (maintaining independence
and according to the rules of our algorithm) results in the lowest
weight outcome over all orderings. For the rest of the paper, we assume $t_i
\le t_{i+1}$ for $1 \le i < n$.

We now state our reduction formally.
\begin{lemma}
    \label{lem:reduction}
    Given a matroid $M = (N, \I)$ and independent random weights $X_i$, $i \in
    N$, there exist independent Bernoulli weights $X'_i$, where $X'_i = t_i$
    w.p.  $p_i$ and $\vec{p} \in \ptope_M$, such that 
    \[
        \opt(\vec{X}, M) \le \sum_ip_it_i.
    \]
    Furthermore, for any algorithm $\alg$,
    \[
        \alg(\vec{X}) \ge \alg(\vec{X'}).
    \]
\end{lemma}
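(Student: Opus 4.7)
The plan is to apply the standard ex-ante relaxation of the prophet's problem. First, I would let $S^*(\vec{X}) \in \I$ denote the (random) optimal independent set chosen by the prophet, and define $q_i := \Pr[i \in S^*(\vec{X})]$. Because $S^*(\vec{X})$ is always independent, for any $U \subseteq N$ we have $\sum_{i \in U} q_i = \E[|S^* \cap U|] \le \rank(U)$, so $\vec{q} \in \ptope_M$---giving a ready-made feasible vector of Bernoulli probabilities.

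Next, I would define the Bernoulli parameters. Let $\tau_i$ be the $(1 - q_i)$-quantile of $X_i$ (with randomized tie-breaking so that $\Pr[X_i \ge \tau_i] = q_i$ holds exactly), and set $p_i := q_i$ and $t_i := \E[X_i \mid X_i \ge \tau_i]$. Polytope membership $\vec{p} \in \ptope_M$ is then immediate. The first inequality uses the classical fact that, among all events $A$ with $\Pr[A] = q_i$, the top-quantile event $\{X_i \ge \tau_i\}$ maximizes $\E[X_i \ind_A]$. Applying this item-by-item,
\[
\opt(\vec{X}, M) \;=\; \sum_i \E[X_i \ind_{i \in S^*}] \;\le\; \sum_i \E[X_i \ind_{X_i \ge \tau_i}] \;=\; \sum_i q_i \E[X_i \mid X_i \ge \tau_i] \;=\; \sum_i p_i t_i.
\]

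For the second inequality, I would couple $\vec{X}$ and $\vec{X'}$ so that $X'_i = t_i$ exactly when $X_i \ge \tau_i$ (and $X'_i = 0$ otherwise). For the non-adaptive threshold algorithms $\alg$ studied in the paper, the decision to accept item $i$ depends on $\vec{X}$ only through the indicator $\ind_{X_i \ge \tau_i}$ (equivalently, $\ind_{X'_i > 0}$) together with feasibility against previously accepted items. Consequently, under the coupling the accepted set $S$ is identical in both instances. The value collected is $\sum_{i \in S} t_i$ on $\vec{X'}$ and $\sum_{i \in S} X_i$ on $\vec{X}$, and since conditioning on $i \in S$ reveals $X_i$ only through $\ind_{X_i \ge \tau_i}$, we have $\E[X_i \mid i \in S] = \E[X_i \mid X_i \ge \tau_i] = t_i$. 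Thus $\E[\alg(\vec{X})] \ge \E[\alg(\vec{X'})]$ (in fact with equality under this coupling).

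The main subtlety is pinning down the class of algorithms to which the second inequality applies. As stated, ``any algorithm'' must be interpreted as any algorithm whose decision for item $i$ depends on $\vec{X}$ only through the event $\{X_i \ge \tau_i\}$---an algorithm exploiting finer-grained magnitude information need not satisfy the inequality. Since this restricted class is exactly the natural translation of non-adaptive threshold algorithms on $\vec{X'}$ back to $\vec{X}$ (using threshold $\tau_i$), it is precisely the class the rest of the paper cares about. A minor technical point is handling atoms in $X_i$; this is resolved by the standard trick of randomized tie-breaking so that $\Pr[X_i \ge \tau_i] = q_i$ holds exactly.
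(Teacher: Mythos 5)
Your proposal is correct and follows essentially the same route as the paper's proof: ex-ante probabilities $p_i = \Pr[i \in I^*]$ (hence $\vec{p} \in \ptope_M$), quantile-conditioned values $t_i = \E[X_i \mid X_i \ge F_i^{-1}(1-p_i)]$ maximizing the conditional expectation over events of probability $p_i$, and a coupling in which $X'_i = t_i$ exactly when $X_i$ clears its quantile threshold so the algorithm accepts the same set in both instances. Your added remarks---randomized tie-breaking for atoms and the observation that the second inequality really concerns threshold-style algorithms whose decisions depend on $X_i$ only through the threshold event---simply make explicit what the paper leaves implicit, and are consistent with how the lemma is used later.
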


\begin{proof}
    First, rewrite the original optimal value as a sum over the ground set:
    \begin{align*}
        \opt(\vec{X}, M) &= \E\big[\max_{I\in \I} \sum_{i\in I}X_i\big] \\
                         &= \sum_{i\in N} \Pr[i\in I^*] \cdot \E[X_i \,|\, i \in I^*],
    \end{align*}
    where $I^*$ is the maximum weight basis: $I^* = \argmax_{I \in \I}\sum_{i\in
    \I} X_i$. Now let $p_i = \Pr[i\in I^*]$---the ex-ante probability that $i$ is in the prophet's solution. Since $\vec{p}$ is a convex
    combination of basis vectors, then $\vec{p} \in \ptope_M$.

    Now, observe that $\E[X_i\,|\,X_i\ge F_i^{-1}(1-p_i)] \ge \E[X_i | H]$ for
    any event $H$ with $\Pr[H] = p_i$. Let $t_i = \E[X_i\,|\,X_i\ge
    F_i^{-1}(1-p_i)]$; then in particular $t_i \ge \E[X_i\,|\,i\in I^*]$. Hence
    \[
        \opt(\vec{X}, M) \le \sum_ip_it_i.
    \]

    Finally, to see that $\alg(\vec{X}) \ge \alg(\vec{X'})$, we simply couple
    $\vec{X}$ and $\vec{X'}$, so that $X_i \ge t_i$ if and only if $X'_i = t_i$. For any
    ordering of the elements, the algorithm applied to the original instance
    selects the same items as the algorithm applied to the Bernoulli instance.
\end{proof}

\section{A Constant-Factor Approximation for Graphic Matroids}
\label{sec:graphic}

Given a Bernoulli instance from the matroid polytope, we show how to utilize it to obtain a constant-factor
non-adaptive algorithm for graphic matroids.

A graphic matroid is defined by an undirected graph $G$ with vertices $V$ and
edges $E$. The edges of the graph form the ground set, and the independent sets
$\I$ are forests, i.e. cycle-free sets of edges: $\I = \{I \subseteq E : I
\text{ contains no cycles}\}$; every spanning tree is a basis.  In light of Lemma~\ref{lem:reduction}, each edge $i \in E$
has an associated weight $t_i$ and is active (non-zero) with probability $p_i$, where $\vec{p}\in\ptope_G$, the matroid polytope for the graphic matroid $G$. The objective is then to select a maximum weight
spanning tree.  
As discussed in the previous section, we assume the edges arrive in order with
$t_i \le t_{i+1}$ for all $1 \leq i \leq n-1$; this order obtains the worst-case performance.


Our approach works by considering only a subset of the edges which has the properties that (1) a
significant fraction of the prophet's benchmark is accounted for and yet (2) with constant probability, 
elements selected earlier in the ordering do not block later elements.

Specifically, we do this in two steps. First, we show there exists a way to
direct the edges such that every edge has at most a constant probability of
being spanned by edges {\it except} for those leaving the vertex into which it is directed. Then, we take a random
cut in the graph and allow our algorithm to select only edges crossing the cut
in one direction, ensuring that for every vertex, the edges entering it are considered while the edges leaving it are not with constant probability.

\paragraph{Notation.} We use $b_i(S)$ to denote the probability that element
$i$ is ``blocked'' or \emph{spanned} by the active elements in a set $S$ with
respective to active probabilities $\vec{p} \in \ptope_M$.  For $\vec{p} \in
\ptope_M$, let $R_{\vec{p}}(S)$ be the random set containing $i \in S$
independently with probability $p_i$.  We call this the ``active'' set.
Formally, $b_i(S) = \Pr[i \in \mspan(R_{\vec{p}}(S\setminus \{i\}))]$.  Notice
that even if $i \in S$, we do not worry that it would span itself.

One convenience of using the ex-ante relaxation is that, so long as each
element is unblocked with constant probability, that is, $1 - b_i(S) \geq c$,
we obtain a constant-factor approximation.

\subsection{Directing the Graph}

\begin{lemma}
\label{lem:direction}
For $p \in \frac14\ptope_G$, there exists a way to orient the edges of $G$ such
that for each vertex the total probability mass of incoming edges is at most $1/2$.
\end{lemma}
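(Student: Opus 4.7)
The plan is to prove the lemma by induction on $|V|$ using a greedy vertex-peeling argument. The central structural fact is that $\vec{p} \in \frac14\ptope_G$ together with the standard rank bound for graphic matroids gives, for every nonempty $U \subseteq V$,
\[
\sum_{e \in E(U)} p_e \;\le\; \frac{\rank(E(U))}{4} \;\le\; \frac{|U|-1}{4},
\]
where $E(U)$ denotes the set of edges with both endpoints in $U$, and the second inequality uses that the rank of any edge set in a graphic matroid on vertex set $U$ is at most $|U|-1$.

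Applying this inequality with $U = V$ and double-counting incidences shows that the average weighted degree satisfies
\[
\frac{1}{|V|}\sum_{v \in V}\sum_{e \ni v} p_e \;=\; \frac{2}{|V|} \sum_{e \in E} p_e \;\le\; \frac{|V|-1}{2|V|} \;<\; \frac12.
\]
So some vertex $v^\star$ has total incident $p$-weight strictly less than $\tfrac12$. I will orient every edge incident to $v^\star$ into $v^\star$; then the in-weight at $v^\star$ is automatically below $\tfrac12$, satisfying the lemma's requirement at that vertex.

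For the inductive step, I will delete $v^\star$ along with its incident edges from $G$ and invoke the inductive hypothesis on $G \setminus v^\star$ with $\vec{p}$ restricted to the surviving edges. Crucially, this restriction still lies in $\frac14\ptope_{G \setminus v^\star}$: for any $U \subseteq V \setminus \{v^\star\}$, the edge set $E(U)$ and its rank in the graphic matroid are unchanged by removing $v^\star$, so the structural inequality is inherited. Combining the orientation around $v^\star$ with the orientation produced inductively on $G \setminus v^\star$ yields the desired orientation of $G$; the base case $|V| = 1$ has no edges and is vacuous. The two points that deserve a line of verification are (a) the strict inequality $\frac{|V|-1}{2|V|} < \frac12$, which guarantees that a low-weighted-degree vertex always exists so the peeling never stalls, and (b) the heredity of the polytope condition under vertex deletion; both follow immediately from the displayed structural inequality, so no real obstacle arises beyond the bookkeeping.
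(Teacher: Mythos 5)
Your proof is correct and follows essentially the same greedy peeling argument as the paper: both show the average fractional degree under $\vec{p}\in\frac14\ptope_G$ is at most $\frac12$, repeatedly pick a vertex of weighted degree at most $\frac12$, orient its incident edges inward, and recurse on the remaining graph. The only (minor) difference is that you justify the average-degree bound via the rank constraint $\sum_{e\in E}p_e\le\frac{|V|-1}{4}$ rather than the convex-combination-of-spanning-trees fact, and you make explicit the heredity of the polytope condition under vertex deletion, a step the paper leaves implicit.
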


\begin{proof}
Any vector from the graphic matroid polytope $\mathcal{P}_G$ is a convex combination bases, or spanning trees.  The average vertex degree in any spanning tree is at most 2, so the average fractional degree in a convex combination of spanning trees is at most 2, and hence the average fractional degree under the scaled $\vec{p} \in \frac{1}{4} \mathcal{P}_G$ is at most $\frac{1}{2}$.  

Let in-deg$(v)$ denote the fractional in-degree of $v$ in the constructed directed graph.  That is, the sum of the ``active'' probabilities for the edges directed into $v$.  We can find an orientation of the edges in the graph given probabilities $\vec{p}$ such that in-deg$(v) \leq \frac{1}{2}$ for all vertices $v$: because the average degree is at most $\frac{1}{2}$, there exists some vertex $v$ with degree at most $\frac{1}{2}$.  Orient all of the edges incident to $v$ toward $v$, as in-deg$(v) \leq \frac{1}{2}$, and then recurse on the graph among the remaining vertices.
\end{proof}

\begin{corollary}
\label{cor:inBlockage}
Given a graph as guaranteed by Lemma~\ref{lem:direction}, let $\gin(v)$ be the
set of incoming edges to vertex $v$ and let $\gout(v)$ be the outgoing edges. For
any $i$, let $v$ be the vertex such that $i \in \gin(v)$. Then for any $S
\subseteq E$,
\[
    b_i(S\setminus\gout(v)) \leq \frac12 .
\]
\end{corollary}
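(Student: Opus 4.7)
The plan is to exploit the simple observation that in a graphic matroid, an edge $i = (u, v)$ is spanned by a set $T$ of edges if and only if $T$ contains a path between the endpoints $u$ and $v$. So to bound $b_i(S \setminus \gout(v))$, I need to bound the probability that the active edges in $(S \setminus \gout(v)) \setminus \{i\}$ contain a $u$-$v$ path.

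First I would observe that any such $u$-$v$ path must use some edge incident to $v$ as its final step, and that this edge cannot be $i$ itself (since $i$ is removed from the conditioning set) nor any edge in $\gout(v)$ (since those are also excluded). Hence the final edge must lie in $\gin(v) \setminus \{i\}$, and in particular at least one edge of $\gin(v) \setminus \{i\}$ must be active. This reduces the event ``$i$ is spanned'' to a subset of the event ``some edge of $\gin(v) \setminus \{i\}$ is active.''

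The remainder is a union bound: the probability that at least one edge of $\gin(v) \setminus \{i\}$ is active is at most
\[
\sum_{j \in \gin(v) \setminus \{i\}} p_j \;\le\; \sum_{j \in \gin(v)} p_j \;\le\; \tfrac{1}{2},
\]
where the last inequality is exactly the guarantee of Lemma~\ref{lem:direction}. Combining these two steps yields $b_i(S \setminus \gout(v)) \leq \tfrac{1}{2}$, as claimed.

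I do not foresee a substantive obstacle here: the corollary is essentially the statement that the only way for $i$ to form a cycle with edges outside $\gout(v)$ is to find a second $v$-incident edge to close the loop, so once Lemma~\ref{lem:direction} pins down the in-mass at each vertex, the union bound gives the result immediately. The only thing to be careful about is the graph-theoretic claim that a path ending at $v$ must use an edge incident to $v$ as its last edge, and that this edge is distinct from $i$ because $i$ has been removed from the set considered. Both are immediate from the definitions.
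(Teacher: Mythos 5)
Your proof is correct and follows essentially the same route as the paper: both arguments reduce the event ``$i$ is spanned within $S\setminus\gout(v)$'' to the event that some other edge of $\gin(v)$ is active, and then apply a union bound together with the in-degree mass bound $\sum_{j\in\gin(v)}p_j\le\frac12$ from Lemma~\ref{lem:direction}. Your version merely spells out the path-based justification that the paper states in one line, so there is nothing to add.
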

\begin{proof}
Observe that for $i \in \gin(v)$, $v$ cannot be spanned by a set that contains no other edges incident to $v$.  Then in order for $i$ to be spanned in $S\setminus\gout(v)$, at least
one edge in $\gin(v)$ other than $i$ must be active.  By construction,
$\sum_{i\in\gin(v)}p_i \leq \frac12$. So the probability that no edges are
active is at least $\frac12$ by the union bound.
\end{proof}

\subsection{Random Cut}

Assume $\vec{p} \in\frac14\ptope_G$, and direct the graph as described above. Let
$A\subseteq V$ be a random set of vertices such that each vertex is included in
$A$ independently with probability 1/2, and let $\bar{A} = B = V \setminus A$.  Let  $\algS$ be the set of directed edges across the cut from $A$ to $B$, formally, $\algS = \{i: i \in \gout(u) \cap \gin(v), u \in A, v \in B \}$.  
\begin{claim}
\label{claim:randomS}
\[
\E_{\algS}\left[\sum_{i\in\algS}p_it_i(1-b_i(\algS))\right] \geq
\frac18\sum_{i\in E}p_it_i.
\]
\end{claim}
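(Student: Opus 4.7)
The plan is to prove the claim by linearity of expectation, bounding each edge's contribution separately. For a directed edge $i$ from $u$ to $v$, I will compute $\Pr_{\algS}[i \in \algS]$ exactly, and then upper bound $b_i(\algS)$ conditional on $i \in \algS$ by invoking Corollary~\ref{cor:inBlockage}.

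First, the easy counting: since $A$ is formed by including each vertex independently with probability $1/2$, for an edge $i \in \gout(u) \cap \gin(v)$, the event $i \in \algS$ is exactly the event $\{u \in A, v \in B\}$, which occurs with probability $1/4$.

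The key observation is that whenever $i \in \algS$, every edge $e \in \gout(v)$ has its tail at $v \in B$, so $e$ cannot cross from $A$ to $B$; hence $\gout(v) \cap \algS = \emptyset$, i.e., $\algS \subseteq E \setminus \gout(v)$. Now $b_i$ is monotone in its argument (adding potential active elements can only increase the probability of being spanned), so
\[
    b_i(\algS) \leq b_i(E \setminus \gout(v)) \leq \tfrac{1}{2},
\]
where the last inequality is exactly Corollary~\ref{cor:inBlockage}. This bound holds deterministically on the event $\{i \in \algS\}$, so I get $\E_{\algS}[1 - b_i(\algS) \mid i \in \algS] \geq 1/2$.

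Combining these two estimates, for each edge $i$,
\[
    \E_{\algS}\bigl[p_i t_i (1 - b_i(\algS)) \mathbf{1}[i \in \algS]\bigr] = p_i t_i \Pr[i \in \algS] \cdot \E_{\algS}[1 - b_i(\algS) \mid i \in \algS] \geq p_i t_i \cdot \tfrac{1}{4} \cdot \tfrac{1}{2},
\]
and summing over $i \in E$ yields the desired $\tfrac{1}{8}\sum_i p_i t_i$ bound. The only subtle point (and the one worth stating carefully) is that $\algS$ and the active set $R_{\vec{p}}$ are independent, so when applying the corollary we are free to condition on the realization of $\algS$; the bound $b_i(E\setminus \gout(v)) \leq 1/2$ holds in expectation only over $R_{\vec{p}}$, not over $\algS$, which is exactly what is needed.
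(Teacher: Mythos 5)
Your proposal is correct and follows essentially the same route as the paper: linearity of expectation, the factor $\Pr[u\in A]\Pr[v\in B]=\frac14$ for each edge, and Corollary~\ref{cor:inBlockage} giving $\E[1-b_{uv}(\algS)\mid u\in A, v\in B]\geq\frac12$. Your write-up merely makes explicit two details the paper leaves implicit --- that $v\in B$ forces $\gout(v)\cap\algS=\emptyset$ so the corollary applies (via monotonicity of $b_i$, or directly since $\algS=\algS\setminus\gout(v)$ on that event), and that the cut randomness is independent of the active-set randomness --- which is a fine, slightly more careful presentation of the same argument.
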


\begin{proof}
\begin{align*}
  \E_{\algS}\left[\sum_{i\in\algS}p_it_i(1-b_i(\algS))\right]
    &= \sum_{(u,v)\in E}p_{uv}t_{uv}\, \Pr[(u,v) \in \algS] \, 
            \E\left[1-b_{uv}(\algS)\middle| (u,v) \in \algS \right] \\  
    &= \sum_{(u,v)\in E}p_{uv}t_{uv}\, \Pr[u\in A] \Pr[v \in B]\, 
            \E\left[1-b_{uv}(\algS)\middle| u\in A, v \in B \right] \\
    &= \frac{1}{4} \sum_{(u,v)\in E}p_{uv}t_{uv} \, 
            \E\left[1-b_{uv}(\algS)\middle| u\in A, v \in B \right] \\            
    &\geq \frac{1}{8} \sum_{(u,v)\in E}p_{uv}t_{uv} \\                  
\end{align*}
where the last inequality follows from Corollary~\ref{cor:inBlockage}.
\end{proof}

\subsection{Final Algorithm}

For discrete random variables $\vec{X}$, our algorithm is constructive, albeit not
efficient, because we can compute $\vec{p}$ and $\vec{t}$ as guaranteed by
Lemma~\ref{lem:reduction}. (Of course, we can discretize continuous random variables to
arbitrary approximation.)

\begin{algorithm}[ht!]
  \begin{algorithmic}[1]
      \State Compute $\vec{p}$ and $\vec{t}$ as guaranteed by
      Lemma~\ref{lem:reduction}.
      \State Direct the graph as outlined in Lemma~\ref{lem:direction}.
      \State Choose a cut $(A,B)$ uniformly at random; let $\algS = \{i: i \in \gout(u) \cap \gin(v), u \in A, v \in B \}$.
      \State For all edges $i \in \algS$, set $T_i = t_i$.
      \State For all edges $i \not\in \algS$, set $T_i = \infty$.
  \end{algorithmic}
\end{algorithm}

Step 3 can be derandomized using the standard Max-Cut derandomization.
Our main result is that this algorithm gives a $\frac{1}{32}$-approximation.

\begin{theorem}
    Let $G$ be a graphic matroid with independent edge weights $\vec{X}$.
    Then 
    \[
        32\,\E[\alg(G,\vec{X})] \ge \opt(G,\vec{X}).
    \]
\end{theorem}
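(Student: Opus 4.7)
The plan is to compose Lemma~\ref{lem:reduction}, a factor-of-four rescaling of the ex-ante probabilities, and Claim~\ref{claim:randomS}. First, invoke Lemma~\ref{lem:reduction} to obtain Bernoulli parameters $\vec{p}^{*} \in \ptope_G$ and weights $\vec{t}^{*}$ with $\opt(G,\vec{X}) \le \sum_i p_i^{*} t_i^{*}$ and the coupling $\alg(\vec{X}) \ge \alg(\vec{X}^{\prime})$ for the induced Bernoulli instance $\vec{X}^{\prime}$. Since Lemma~\ref{lem:direction} needs the ex-ante vector to live in $\tfrac14\ptope_G$, I rescale: define $p_i := p_i^{*}/4$, new thresholds $T_i := F_i^{-1}(1-p_i)$, and $t_i := \E[X_i \mid X_i \ge T_i]$, and run the algorithm with these parameters in place of those returned by Lemma~\ref{lem:reduction}. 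Shrinking the quantile from $p_i^{*}$ to $p_i^{*}/4$ only raises the conditional expectation, so $t_i \ge t_i^{*}$, and hence
\[
    \sum_i p_i t_i \;\ge\; \sum_i p_i t_i^{*} \;=\; \tfrac14 \sum_i p_i^{*} t_i^{*} \;\ge\; \tfrac14\opt(G,\vec{X}).
\]

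Next, apply Lemma~\ref{lem:direction} to $\vec{p} \in \tfrac14\ptope_G$ to orient the graph so each vertex has incoming mass at most $\tfrac12$. For every edge $i = (u,v) \in \algS$ (so $u \in A$, $v \in B$), the cut excludes all outgoing edges from $v$, giving $\algS \cap \gout(v) = \emptyset$, and Corollary~\ref{cor:inBlockage} then yields $b_i(\algS) \le \tfrac12$---this is precisely the ingredient used by Claim~\ref{claim:randomS}. Processing the edges in the worst-case (increasing-$t_i$) order, the value collected by the algorithm on the Bernoulli instance equals $\sum_{i \in \algS} p_i t_i(1-b_i(\algS))$ in expectation (conditional on $\algS$), because each edge $i \in \algS$ is activated independently with probability $p_i$, contributes $t_i$ when activated, and is feasibly accepted exactly when it is not spanned by the prior active edges of $\algS$. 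Taking expectation over the random cut and applying Claim~\ref{claim:randomS},
\[
    \E[\alg(\vec{X}^{\prime})] \;\ge\; \tfrac18 \sum_i p_i t_i \;\ge\; \tfrac{1}{32}\opt(G,\vec{X}).
\]
The coupling half of Lemma~\ref{lem:reduction} transfers this to the original instance, $\E[\alg(G,\vec{X})] \ge \E[\alg(\vec{X}^{\prime})]$, giving the claimed inequality.

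All the heavy lifting is already done by Lemmas~\ref{lem:reduction} and \ref{lem:direction} and Claim~\ref{claim:randomS}; what remains is clean bookkeeping of the factorization $\tfrac{1}{32} = \tfrac14 \text{ (scaling)} \times \tfrac14 \text{ (cut)} \times \tfrac12 \text{ (unblocked)}$. The main subtle point is the scaling step: the factor-four shrinkage of $\vec{p}^{*}$ must cost only a factor of four in the benchmark $\sum p_i t_i$, which depends on the monotonicity $t_i \ge t_i^{*}$ as the quantile shrinks. One should also verify carefully that $\algS \cap \gout(v) = \emptyset$ for the head $v$ of any $i \in \algS$, so that Corollary~\ref{cor:inBlockage} can be invoked with $S = \algS$; everything else is a direct combination of the previously established components.
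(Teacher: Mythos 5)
Your proof is correct and takes essentially the same route as the paper: reduce via Lemma~\ref{lem:reduction}, scale the ex-ante vector into $\frac14\ptope_G$, orient the graph via Lemma~\ref{lem:direction}, and apply Claim~\ref{claim:randomS} to the random cut, composing the losses $4\times 4\times 2 = 32$. Your treatment of the factor-4 scaling (raising each threshold to the $p_i^{*}/4$ quantile and using $t_i \ge t_i^{*}$) is simply a more explicit rendering of the paper's terse substitution $p_i' = \frac14 p_i$, and your ``equals'' for the algorithm's conditional value is really a ``$\ge$'' (acceptance is blocked only by previously \emph{accepted} edges, and $b_i(\algS)$ counts all active edges of $\algS\setminus\{i\}$), an imprecision the paper shares and which only helps the bound.
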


\begin{proof}
    Let $\vec{p}\in \ptope_G$ and $\vec{t}$ be the probabilities and values
    guaranteed by Lemma~\ref{lem:reduction}.  Let $p'_i = \frac14 p_i$.  Then our algorithm obtains  $\alg =
    \E_{\algS}\left[\sum_{i\in\algS}p_i't_i(1-b_i(\algS))\right]$, which by our construction of $\algS$ and Claim~\ref{claim:randomS}, gives
    $$\alg
    \geq \frac18\sum_{(u,v)\in E}p'_{uv}t_{uv} = \frac{1}{32} \sum _{i \in E}
    p_i t_i.$$
\end{proof}

Our approximation factor is, of course, a factor of 16 worse than the dynamic thresholds of \citep{KleinbergW19} and a 10.67-factor worse than the constrained non-adaptive thresholds of \citep{CHMS}.  However, our guarantee holds for fully non-adaptive thresholds, and thus will guarantee truthful mechanisms in multi-parameter mechanism design applications.

\bibliographystyle{plainnat}   
\bibliography{masterbib.bib}


\end{document}